\documentclass[11pt]{article}
\usepackage{graphicx}
\usepackage{tabularx}
\usepackage{url}
\usepackage{amsthm}
\usepackage{amsmath}
\usepackage{amsfonts}
\usepackage[labelfont=bf]{caption}

\theoremstyle{plain}
\newtheorem{theorem}{Theorem}
\newtheorem{lemma}{Lemma}

\theoremstyle{definition}

\newtheorem{example}{Example}
\theoremstyle{remark}

\textheight=21cm
\topmargin=-1cm
\oddsidemargin=0.7cm
\textwidth=15.5cm
\date{}

\begin{document}

\title{Lower Bounds on Cardinality of Reducts for Decision Tables from
Closed Classes}

\author{Azimkhon Ostonov and Mikhail Moshkov \\
Computer, Electrical and Mathematical Sciences \& Engineering Division \\ and Computational Bioscience Research Center\\
King Abdullah University of Science and Technology (KAUST) \\
Thuwal 23955-6900, Saudi Arabia\\ \{azimkhon.ostonov,mikhail.moshkov\}@kaust.edu.sa
}

\maketitle

\begin{abstract}
In this paper, we consider classes of decision tables closed under
removal of attributes (columns) and changing of decisions attached to rows.
For decision tables from closed classes, we study lower bounds on the minimum
cardinality of reducts, which are minimal sets of attributes that allow us
to recognize, for a given row, the decision attached to it. We assume that the
number of rows in decision tables from the closed class is not bounded from above by
a constant. We divide the set of such closed classes into two families. In
one family, only standard lower bounds $\Omega (\log $ ${\rm cl}(T))$ on the
minimum cardinality of reducts for decision tables hold, where ${\rm cl}(T)$
is the number of decision classes in the table $T$. In another family, these
bounds can be essentially tightened up to $\Omega ({\rm cl}(T)^{1/q})$ for
some natural $q$.
\end{abstract}

{\it Keywords}: decision table, closed class, reduct.

\section{Introduction\label{S1}}

Decision tables are a well-known way of presenting the information needed to
make decisions. These tables are used, in particular, in data analysis,
including classification problems, in modeling and studying problems related to combinatorial
optimization, fault diagnosis, computational geometry, etc. \cite%
{BorosHIK97,ChikalovLLMNSZ13,FurnkranzGL12,Humby73,Moshkov05,MoshkovZ11,Pawlak91,Pollack71,RokachM07}%
. Note that finite information systems with a selected decision attribute,
data sets with a selected class attribute, and partially defined Boolean
functions studied in various fields of data analysis as representations of
decision problems can naturally be interpreted as decision tables.

In this paper, we consider classes of decision tables closed under removal
of attributes (columns) and changing of decisions attached to rows. The most
natural examples of such classes are closed classes of decision tables
derived from information systems: the set of decision tables corresponding
to problems over an information system forms a closed class of decision
tables. However, the family of all closed classes of decision tables is
essentially wider than the family of closed classes derived from information
systems. In particular, the union of classes derived from two information systems is a
closed class, but generally, there is no an information system for which
this union is the closed class derived from it.

For decision tables from closed classes, we study lower bounds on the minimum
cardinality of (decision) reducts, which are minimal sets of attributes that
allow us to recognize, for a given row of the table, the decision attached to
it. Reducts are one of the main notions of rough set theory in which they
are used to choose appropriate features, to solve classification problems,
and to compress the knowledge  \cite{ChikalovLLMNSZ13,Pawlak91,PawlakS07,Slezak02,StawickiSJW17}. The bounds on the minimum
cardinality of reducts are of significant interest for rough set theory.

In this paper, we assume  that the number of rows in decision tables from the closed
class is not bounded from above by a constant. We divide the set of such
closed classes into two families. In one family, only standard lower bounds $%
\Omega (\log $ ${\rm cl}(T))$ on the minimum cardinality of reducts for
decision tables hold, where ${\rm cl}(T)$ is the number of decision classes
in the table $T$. In another family, these bounds can be essentially
tightened up to $\Omega ({\rm cl}(T)^{1/q})$ for some natural $q$. The
obtained results can be useful for the specialists in data analysis.

The present paper consists of six sections. Sections \ref{S2} and \ref{S3}
contain main definitions and some results related to the decision tables and to the closed
classes of decision tables. In Sect. \ref{S4}, we discuss lower bounds on
the cardinality of reducts and, in Sect. \ref{S5} -- examples related to the
closed classes of decision tables derived from information systems. Section %
\ref{S6} contains short conclusions.

\section{Decision Tables\label{S2}}

Let $B$ be a nonempty finite set with $k$ elements, $k\geq 2$. A $B$%
-decision table $T$ is a rectangular table with $n$ columns labeled with
attributes (really names of attributes) in which rows are pairwise different
tuples from $B^{n}$ that are labeled with nonnegative integers (decisions).
Denote by ${\rm Rows}(T)$ the set of rows of the table $T$, $N(T)$ the
number of rows in $T$, and ${\rm cl}(T)$ the number of different decisions
attached to rows of $T$ (the number of decision classes in the table $T$).
The number $n$ will be called the dimension of the table $T$ and will be
denoted $\dim T$.

A test for the table $T$ is a set of attributes (columns) of the table $T$
such that any two rows of the table $T$ labeled with different decisions are
different in at least one of the considered columns. A reduct for the table $%
T$ is a test for $T$ each proper subset of which is not a test. We denote by
$R(T)$ the minimum cardinality of a reduct for the table $T$. If ${\rm cl}(T)<2$, then $R(T)=0$.

Denote by $[T]$ the set of decision tables that can be obtained
from $T$ in the following way: we can remove from $T$ an arbitrary number of
attributes (columns), keep only one row from each group of equal rows in the
obtained table, and change in an arbitrary way decisions attached to the remaining
rows.

A decision table $T$ with $n$ columns will be called quasicomplete
if there exist two-elements subsets $B_{1},\ldots ,B_{n}$ of the set $B$
such that $$B_{1}\times \cdots \times B_{n}\subseteq {\rm Rows}(T).$$ We
denote by $I(T)$ the maximum dimension of a quasicomplete table from $[T]$.
The next statement follows immediately from Theorem 4.6 \cite{Moshkov05}.

\begin{lemma} \label{L1}
For any
$B$-decision table $T$ with ${\rm cl}(T)\ge 2$,
$$
N(T)\leq (k^{2}\dim T)^{I(T)}.
$$
\end{lemma}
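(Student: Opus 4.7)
The plan is to prove this via a Pajor--Sauer--Shelah-type argument adapted to the alphabet $B$ of size $k$. A first observation is that $I(T)$ is a property of ${\rm Rows}(T)$ alone, not of the decisions attached to the rows, because $[T]$ allows arbitrary reassignment of decisions. So the lemma reduces to a purely combinatorial bound: for any row set $S \subseteq B^n$ with $n = \dim T$, control $|S|$ in terms of the largest $m$ for which some $m$-column projection of $S$ contains a product $B_1 \times \cdots \times B_m$ of $2$-element subsets of $B$.

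Call such a datum $(J, (B_j)_{j \in J})$ a \emph{shattering configuration} of $S$ and write $s(S)$ for the total number of shattering configurations. The central claim is $|S| \leq s(S)$, which I would prove by induction on $n$. Fix column $1$ and decompose $S = \bigsqcup_{b \in B} S_b$, where $S_b$ consists of the projections onto columns $2, \ldots, n$ of those rows of $S$ whose first entry is $b$. Apply the inductive hypothesis to each $S_b$, then reorganize the sum by configuration: $\sum_b s(S_b) = \sum_{c'} m(c')$, where $c'$ ranges over potential configurations on columns $\{2, \ldots, n\}$ and $m(c')$ is the number of $b \in B$ for which $c'$ is valid in $S_b$. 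Each $c'$ with $m(c') \geq 1$ is itself a valid configuration of $S$ (not using column $1$), and each unordered pair of values in $\{b : c' \text{ valid in } S_b\}$ extends $c'$ to a configuration of $S$ using column $1$ (with that pair as $B_1$). The elementary inequality $m - 1 \leq \binom{m}{2}$ for $m \geq 1$ then yields $\sum_{c'} m(c') \leq s(S)$, closing the induction.

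Given the Pajor-type inequality $N(T) \leq s(T)$, the quantitative bound follows by counting: every valid shattering configuration has $|J| \leq I(T)$, so $s(T) \leq \sum_{m=0}^{I(T)} \binom{n}{m} \binom{k}{2}^m \leq \sum_{m=0}^{I(T)} (k^2 n / 2)^m$, and since the common ratio $k^2 n / 2$ is at least $2$ (using $k \geq 2$ and $n \geq 1$), a geometric-series estimate gives $s(T) \leq (k^2 n)^{I(T)}$. The main obstacle, as is typical for Sauer--Shelah-type arguments, is setting up the Pajor induction so that each configuration of $S$ is correctly accounted for---in particular, showing that each configuration using column $1$ is generated \emph{exactly once} from a pair in $\{b : c' \text{ valid in } S_b\}$, and that every configuration of $S$ not using column $1$ is indeed represented by some $c'$ with $m(c') \geq 1$ (or, if not, is harmlessly absent from the count). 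Once the bookkeeping is correct, the inequality $m - 1 \leq \binom{m}{2}$ is the combinatorial pivot that makes the whole argument balance.
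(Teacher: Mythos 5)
Your argument is correct, and it is genuinely different from what the paper does: the paper gives no proof at all, simply asserting that the lemma ``follows immediately from Theorem 4.6'' of the cited monograph, so your self-contained Pajor-style double counting is a real alternative rather than a rederivation. The bookkeeping you flag as the main obstacle does close: a configuration $(J,(B_j)_{j\in J})$ of $S$ with $1\in J$ determines uniquely the pair $B_1=\{b,b'\}$ and the residual configuration $c'$ on columns $\{2,\dots,n\}$, and it is generated from $(c',\{b,b'\})$ only when $c'$ is valid in both fibers $S_b$ and $S_{b'}$ (which is exactly the condition under which the full configuration is valid fiberwise, hence valid in $S$); configurations of $S$ avoiding column $1$ that are valid only because the product is covered by the \emph{union} of the fiber projections are simply not charged, which only helps the inequality $\sum_{c'}m(c')\le\sum_{c':m(c')\ge 1}\bigl(1+\binom{m(c')}{2}\bigr)\le s(S)$. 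The final count also works, with one small point worth making explicit: the geometric-series step gives $s(S)\le 2\,(k^2n/2)^{I(T)}=2^{1-I(T)}(k^2n)^{I(T)}$, so you need $I(T)\ge 1$, which is guaranteed because ${\rm cl}(T)\ge 2$ forces two distinct rows and hence a one-column configuration. What your route buys is transparency and tightness information (e.g., $|S|=s(S)$ when $S=\{0,1\}^n$, recovering Sauer--Shelah exactly), at the cost of an induction the paper outsources; the citation route buys brevity but leaves the reader to unpack a theorem stated in a rather different language of quasicomplete subtables.
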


\section{Closed Classes of Decision Tables\label{S3}}

Let $C$ be a set of $B$-decision tables. This set will be called a closed
class of decision tables if $C=\bigcup_{T\in C}[T]$. The closed class $C$
will be called nondegenerate if the number of rows in tables from $C$ is not
bounded from above by a constant.

We now define a parameter $I(C)$ of a nondegenerate closed class $C$ of decision
tables. If the parameter $I$ is bounded from above by a constant on tables
from the class $C$, then $I(C)=\max \{I(T):T\in C\}$. Otherwise, $%
I(C)=+\infty $.

Let us consider the behavior of the function $$N_{C}(n)=\max \{N(T):T\in C,\dim
T\leq n\}$$ defined on the set of natural numbers. This function characterizes the growth in the worst case of the number of rows in decision tables from the class $C$ with the growth of their dimension.

\begin{lemma} \label{L2}

Let $C$ be a nondegenerate closed class of $B$-decision tables.

{\rm (a)} If $I(C)<+\infty$, then $N_{C}(n)\leq
(k^{2}n)^{I(C)}$ for any natural $n$.

{\rm (b)} If $I(C)=+\infty $, then $2^{n}\leq
N_{C}(n)\leq k^{n}$ for any natural $n$.

\end{lemma}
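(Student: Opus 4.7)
I would read part (a) as essentially a direct consequence of Lemma~\ref{L1} after a short preprocessing step, while part (b) splits into a trivial cardinality upper bound and a lower bound extracted from the definition of $I(C)=+\infty$ together with the combinatorial content of quasicompleteness.

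\textbf{Part (a).} Fix $T\in C$ with $\dim T\le n$. If $\mathrm{cl}(T)\ge 2$, Lemma~\ref{L1} immediately gives $N(T)\le (k^{2}\dim T)^{I(T)}\le (k^{2}n)^{I(C)}$, since $T\in C$ and $I(C)<+\infty$ force $I(T)\le I(C)$. The only subtle case is $\mathrm{cl}(T)\le 1$, in which Lemma~\ref{L1} is silent. Here I would replace $T$ by the table $T^{\ast}\in [T]$ obtained from $T$ by assigning pairwise distinct decisions to its rows; arbitrary reassignment of decisions is permitted in the definition of $[T]$, so $T^{\ast}\in C$ by closure, and it does not touch the underlying tuples, so that $\dim T^{\ast}=\dim T$, $N(T^{\ast})=N(T)$, and $I(T^{\ast})=I(T)\le I(C)$. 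If $N(T)\le 1$ the desired bound is trivial; otherwise $\mathrm{cl}(T^{\ast})=N(T)\ge 2$, and Lemma~\ref{L1} applied to $T^{\ast}$ transfers back to $T$.

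\textbf{Part (b).} The upper bound is immediate: the rows of any $T\in C$ with $\dim T\le n$ are pairwise distinct elements of $B^{\dim T}$, so $N(T)\le k^{\dim T}\le k^{n}$. For the lower bound, $I(C)=+\infty$ lets me pick, for each $n$, some $T_{n}\in C$ with $I(T_{n})\ge n$; by definition of $I(T_{n})$, the class $[T_{n}]\subseteq C$ contains a quasicomplete table $S$ of dimension $m\ge n$ witnessed by two-element subsets $B_{1},\ldots,B_{m}\subseteq B$. Retaining only the $n$ columns corresponding to $B_{1},\ldots,B_{n}$ and then collapsing equal rows (another legal operation from the definition of $[\,\cdot\,]$) yields $S'\in C$ of dimension exactly $n$ whose rows contain every tuple of $B_{1}\times\cdots\times B_{n}$; thus $N(S')\ge 2^{n}$, giving $N_{C}(n)\ge 2^{n}$.

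\textbf{Main obstacle.} The only non-mechanical point is the $\mathrm{cl}(T)\le 1$ branch of part (a): one has to notice that freely reassigning decisions is legal in $[T]$ and leaves $N(T)$, $\dim T$, and $I(T)$ untouched, so that Lemma~\ref{L1} can be invoked without losing any information. Everything else is bookkeeping around the closure axiom $C=\bigcup_{T\in C}[T]$ and the definition of quasicompleteness.
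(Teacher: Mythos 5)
Your proposal is correct and follows essentially the same route as the paper: part (a) is Lemma~\ref{L1} applied tablewise, and part (b) combines the trivial bound $N(T)\le k^{\dim T}$ with the existence of a quasicomplete table of dimension $n$ in $C$. You are in fact slightly more careful than the paper in two spots it glosses over --- reassigning distinct decisions to reduce the $\mathrm{cl}(T)\le 1$ case of (a) to Lemma~\ref{L1}, and projecting a quasicomplete table of dimension $m\ge n$ down to dimension exactly $n$ in (b) --- and both of these details are handled correctly.
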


\begin{proof}
(a) Let $I(C)<+\infty $. Using Lemma \ref{L1}, we obtain that $N_{C}(n)\leq
(k^{2}n)^{I(C)}$ for any natural $n$.

(b) Let $I(C)=+\infty $ and $n$ be a natural number.  The inequality $N_{C}(n)\leq k^{n}$ is obvious. Since $I(C)=+\infty $, there exists a quasicomplete table $%
T_{n} \in C$ with $\dim T_{n}=n$. It is clear that $N(T_{n})\geq 2^{n}$.
Therefore $2^{n}\leq N_{C}(n)$. 
\end{proof}

\section{ Bounds on Cardinality of Reducts \label{S4}}

First, we prove an auxiliary statement.

\begin{lemma} \label{L3}
Let $C$ be a nondegenerate closed class of $B$-decision tables and $T$ be a
decision table from $C$ with ${\rm cl}(T)\ge 2$. Then
$$
N_{C}(R(T))\geq {\rm cl}(T).
$$
\end{lemma}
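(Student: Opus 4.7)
The plan is to exhibit a table $T' \in C$ of dimension $R(T)$ with at least $\mathrm{cl}(T)$ rows, which forces $N_C(R(T)) \ge \mathrm{cl}(T)$ directly from the definition of $N_C$.

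First I would pick a reduct $R$ of $T$ with $|R| = R(T)$ and form $T'$ by restricting $T$ to the columns in $R$, then collapsing each group of equal rows to a single row (with some choice of decision). By the definition of $[T]$, the table $T'$ belongs to $[T]$, and since $C$ is closed we have $[T] \subseteq C$, so $T' \in C$. Also $\dim T' = R(T)$, so $N(T') \le N_C(R(T))$.

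Next I would argue $N(T') \ge \mathrm{cl}(T)$. Choose one representative row of $T$ from each of the $\mathrm{cl}(T) \ge 2$ decision classes. Any two of these representatives carry different decisions in $T$, and because $R$ is a test, they must differ in at least one attribute of $R$. Hence their restrictions to the columns of $R$ are pairwise distinct tuples in $B^{|R|}$, so after the collapsing step they give rise to $\mathrm{cl}(T)$ pairwise distinct rows of $T'$. Combining the two inequalities yields $N_C(R(T)) \ge N(T') \ge \mathrm{cl}(T)$.

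There is no real obstacle here; the only subtlety is making sure $T'$ really lies in $C$, but this is exactly what closure under the operations defining $[T]$ guarantees, so the bookkeeping of "remove columns, then delete duplicate rows, then relabel decisions" fits the definition of $[T]$ verbatim.
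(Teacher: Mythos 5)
Your proof is correct and follows essentially the same route as the paper: restrict $T$ to a minimum reduct to obtain $T'\in[T]\subseteq C$ of dimension $R(T)$, then bound $N(T')$ below by ${\rm cl}(T)$ using the test property and above by $N_{C}(R(T))$. The paper states the lower bound $N(T')\geq{\rm cl}(T)$ without elaboration, whereas you spell out the representative-per-class argument; otherwise the two proofs coincide.
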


\begin{proof}
Let $R(T)=m$ and $\{f_{1},\ldots ,f_{m}\}$ be a
reduct with the minimum cardinality for the table $T$. We
denote by $T^{\prime }$ a table from $[T]$, which is obtained from $T$ by
the removal of all attributes with the exception of $f_{1},\ldots ,f_{m}$.
Then the number of rows in the table $T^{\prime }$ should be at least the
number of decision classes in $T$, i.e., $N(T^{\prime })\geq {\rm cl}(T)$.
It is clear that $N(T^{\prime })\leq N_{C}(m)$. Therefore $N_{C}(m)\geq {\rm
cl}(T)$. 
\end{proof}

\begin{theorem} \label{T1}
Let $C$ be a nondegenerate closed class of $B$-decision tables.

{\rm (a)} If $I(C)<+\infty$, then $R(T)\geq {\rm cl}(T)^{1/I(C)}/k^{2}$ for any table  $T \in C$ with ${\rm cl}(T)\ge 2$.

{\rm (b)} If $I(C)=+\infty$, then $R(T)\geq \log _{k}{\rm cl}(T)$ for any table  $T \in C$ with ${\rm cl}(T)\ge 2$.

{\rm (c)} If $I(C)=+\infty$, then the inequality $R(T)\geq \log _{2}{\rm cl}(T)+1$ does not hold for infinitely many tables $T$ from the class $C$ for which both the dimension and the number of decision classes are not bounded from above by constants.
\end{theorem}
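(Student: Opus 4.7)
The plan for parts (a) and (b) is to chain Lemma \ref{L3} with the two cases of Lemma \ref{L2}. Lemma \ref{L3} gives ${\rm cl}(T) \leq N_{C}(R(T))$ whenever ${\rm cl}(T) \geq 2$. For (a), I would substitute the upper bound $N_{C}(n) \leq (k^{2}n)^{I(C)}$ at $n = R(T)$, obtaining $(k^{2}R(T))^{I(C)} \geq {\rm cl}(T)$, and then extract an $I(C)$-th root to get $R(T) \geq {\rm cl}(T)^{1/I(C)}/k^{2}$. For (b), substituting $N_{C}(n) \leq k^{n}$ at $n = R(T)$ gives $k^{R(T)} \geq {\rm cl}(T)$, i.e., $R(T) \geq \log_{k}{\rm cl}(T)$. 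Both parts reduce to a single substitution once Lemmas \ref{L2} and \ref{L3} are available.

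For (c), the plan is to exhibit an explicit witness family $\{T_{n}' : n \geq 1\} \subseteq C$. Since $I(C) = +\infty$, the parameter $I$ is unbounded on $C$, so for each natural $n$ there is some $S_{n} \in C$ with $I(S_{n}) \geq n$, and hence a quasicomplete table $U_{n} \in [S_{n}]$ of dimension at least $n$. Because $C$ is closed, $U_{n} \in C$. To descend from ``dimension at least $n$'' to ``dimension exactly $n$'', I would delete surplus columns: if $B_{1} \times \cdots \times B_{\dim U_{n}} \subseteq {\rm Rows}(U_{n})$ with each $|B_{i}| = 2$, then removing all but the first $n$ of those columns and collapsing duplicate rows yields a table $T_{n} \in [U_{n}] \subseteq C$ that still contains $B_{1} \times \cdots \times B_{n}$ among its rows, and is therefore quasicomplete of dimension exactly $n$.

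Next, I would form $T_{n}' \in [T_{n}] \subseteq C$ by keeping every column of $T_{n}$ and relabeling its rows with pairwise distinct decisions, which is a permitted $[\cdot]$-operation. Then $\dim T_{n}' = n$ and ${\rm cl}(T_{n}') = N(T_{n}) \geq 2^{n}$. The full column set is trivially a test for $T_{n}'$ since its rows are pairwise different tuples in $B^{n}$, so $R(T_{n}') \leq n \leq \log_{2}{\rm cl}(T_{n}')$, and therefore the inequality $R(T_{n}') \geq \log_{2}{\rm cl}(T_{n}') + 1$ fails. As $n$ varies, both $\dim T_{n}' = n$ and ${\rm cl}(T_{n}') \geq 2^{n}$ are unbounded, producing the required infinite family.

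The one mildly delicate step is the descent in (c) from a quasicomplete table of dimension $\geq n$ to one of dimension exactly $n$; it rests on the observation that the projection of a combinatorial rectangle $B_{1} \times \cdots \times B_{m}$ onto any $n$ of its coordinates is still a full rectangle $B_{1} \times \cdots \times B_{n}$. Beyond that, parts (a) and (b) are direct substitutions, and the remainder of (c) is the standard trick of maximising the number of decision classes by giving every row its own decision.
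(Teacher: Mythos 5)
Your proposal is correct and takes essentially the same approach as the paper: parts (a) and (b) are the identical chaining of Lemma \ref{L3} with the two cases of Lemma \ref{L2}, and part (c) rests on the same witness family of quasicomplete tables of dimension exactly $n$ with at least $2^n$ decision classes, defeated by the trivial bound $R(T)\leq \dim T$. The only difference is that you explicitly construct these witnesses (projecting a higher-dimensional quasicomplete table down to $n$ columns and relabeling rows with distinct decisions), whereas the paper asserts their existence in a single line.
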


\begin{proof}
(a) Let $I(C)<+\infty $, $T \in C$, ${\rm cl}(T)\ge 2$, and $R(T)=m$. From Lemma \ref{L2} it follows that $N_{C}(m)\leq(k^{2}m)^{I(C)}$. By Lemma \ref{L3},  $N_{C}(m)\geq {\rm cl}(T)$. Therefore
$(k^{2}m)^{I(C)}\geq {\rm cl}(T)$ and  $m\geq {\rm cl}
(T)^{1/I(C)}/k^{2}$.

(b) Let $I(C)=+\infty $, $T \in C$, ${\rm cl}(T)\ge 2$, and $R(T)=m$. From Lemma \ref{L2} it follows that $N_{C}(m)\leq k^{m}$. By Lemma \ref{L3},  $N_{C}(m)\geq {\rm cl}(T)$.
Therefore $k^{m}\geq {\rm cl}(T)$ and $m\geq \log _{k}{\rm cl}(T)$.

(c)  Let $n$ be a natural number. Since $I(C)=+\infty $,
there exists a quasicomplete decision table $T_{n}$ from $C$ with $\dim
T_{n}=n$ and ${\rm cl}(T_{n})\geq 2^{n}$. Let us assume that $R(T_{n})\geq
\log _{2}{\rm cl}(T)+1$. Then $R(T_n)\geq \log _{2}2^{n}+1=n+1$. It is
obvious, that $n\geq R(T_{n})$. Thus, the inequality $R(T_{n})\geq \log _{2}%
{\rm cl}(T_{n})+1$ does not hold.
\end{proof}

The statement (c) shows that the bound from the statement (b) cannot be improved essentially.

\section{Closed Classes of Decision Tables Derived from Information Systems \label{S5}}

The most natural examples of closed classes of decision tables are classes
derived from infinite information systems. An infinite information
system is a triple $U=(A,F,B)$, where $A$ is an infinite set of objects
called universe, $B$ is a finite set with $k$ elements, $k\geq 2$, and $F$
be an infinite set of functions from $A$ to $B$ called attributes. A problem
over $U$ is specified by a finite number of attributes $f_{1},\ldots
,f_{n}\in F$ that divide the universe $A$ into nonempty domains in each of
which values of attributes $f_{1},\ldots ,f_{n}$ are fixed. Each domain is
labeled with a decision. For a given object $a\in A$, it is required to
recognize the decision attached to the domain to which the object $a$ belongs.
A decision table corresponds to this problem in the following way: the table
contains $n$ columns labeled with attributes $f_{1},\ldots ,f_{n}$, rows of
this table correspond to domains and are labeled with decisions attached to
the domains.

We denote by ${\rm Tab}(U)$ the set of decision tables corresponding to all
problems over the information system $U$. One can show that ${\rm Tab}(U)$
is a nondegenerate closed class of decision tables. We will say that this
class is derived from the information system $U$.

A subset $\{f_{1},\ldots ,f_{p}\}$ of the set $F$ is called independent if
there exist two-element subsets $B_{1},\ldots ,B_{p}$ of the set $B$ such
that, for any tuple $(b_{1},\ldots ,b_{p})\in B_{1}\times \cdots \times
B_{p} $, the equations system $$\{f_{1}(x)=b_{1},\ldots ,f_{p}(x)=b_{p}\}$$
has a solution from $A$. If, for any natural $p$, the set $F$
contains an independent subset, which cardinality is equal to $p$, then $I(%
{\rm Tab}(U))=+\infty $. Otherwise, $I({\rm Tab}(U))$ is the maximum
cardinality of an independent subset of the set $F$.

We now consider examples of infinite information systems from the book
\cite{MoshkovZ11}.

\begin{example}
Let $P$ be the Euclidean plane and $l$ be a straight line in $P$. This line
divides the plane into two open half-planes $h_{1}$ and $h_{2}$, and the line
$l$. We correspond an attribute to the line $l$. This attribute takes the
value $0$ on points from $h_{1}$ and the value $1$ on points from $h_{2}$
and $l$. We denote by $F_{P}$ the set of attributes corresponding to all
lines in $P$ and consider the information system $U_{p}=(P,F_{P},\{0,1\})$.
There are two lines that divide the plane $P$ into four domains, but there
are no three lines that divide $P$ into eight domains. Therefore $I({\rm Tab}%
(U_{P}))=2$ and, for any table $T\in {\rm Tab}(U_{P})$ with ${\rm cl}(T)\ge 2$, $R(T)\geq
{\rm cl}(T)^{1/2}/4$. This lower bound is essentially tighter than the
standard bound $R(T)\geq \log _{2}{\rm cl}(T)$.
\end{example}

\begin{example}
Let $m$ and $t$ be natural numbers. We denote by ${\rm Pol}(m)$ the set of
polynomials with integer coefficients that depend on variables $x_{1},\ldots
,x_{m}$. We denote by ${\rm Pol}(m,t)$ the set of polynomials from $Pol(m)$,
which degree is at most $t$. We define information systems $U(m)$ and $%
U(m,t) $ in the following way: $U(m)=(%
\mathbb{R}
^{m},F(m),E)$ and $U(m,t)=(%
\mathbb{R}
^{m},F(m,t),E),$ where $%
\mathbb{R}
$ is the set of real numbers, $E=\{-1,0,+1\}$, $F(m)=\{{\rm sign}(p):p\in {\rm Pol}(m)\}$, $%
F(m,t)=\{{\rm sign}(p):p\in {\rm Pol}(m,t)\}$, and ${\rm sign}(x)=-1$ if $%
x<0 $, ${\rm sign}(x)=0$ if $x=0$, and ${\rm sign}(x)=+1$ if $x>0$. One can
show that $I({\rm Tab}(U(m)))= +\infty $ and $I({\rm Tab}(U(m,t)))<+\infty $.
Therefore, for any natural $m$ and any table $T$ from ${\rm Tab}(U(m))$ with ${\rm cl}(T)\ge 2$, $%
R(T)\geq \log _{3}cl(T)$ and this bound cannot be tightened essentially. For
any natural $m$ and $t$ and any table $T$ from ${\rm Tab}(U(m,t))$ with ${\rm cl}(T)\ge 2$, $R(T)\geq
{\rm cl}(T)^{1/q}/9$ for some natural $q$.
\end{example}

\section{Conclusions\label{S6}}

In this paper, we divided the set of nondegenerate closed classes of
decision tables into two families. For closed classes from one family, only standard lower bounds
$\Omega (\log $ ${\rm cl}(T))$ on the minimum cardinality of reducts for
decision tables hold, where ${\rm cl}(T)$ is the number of decision classes
in the table $T$. For closed classes from  another family, these bounds can be essentially
tightened up to $\Omega ({\rm cl}(T)^{1/q})$ for some natural $q$.

\subsection*{Acknowledgements}

Research reported in this publication was supported by King Abdullah
University of Science and Technology (KAUST).

\bibliographystyle{spmpsci}
\bibliography{abc_bibliography}

\end{document}